\newcommand{\suchthat}{\;\ifnum\currentgrouptype=16 \middle\fi|\;}
\newcommand{\cov}{\mathrm{Cov}}
\newtheorem{prop}{Proposition}
\newtheorem{defi}{Definition}
\newtheorem{rem}{Remark}
\title{Finite-dimensional approximation of Gaussian processes with inequality constraints}
\begin{document}
\maketitle
\begin{center}
\textbf{H. Maatouk}\\
\vskip0.2cm
INRIA Centre de Recherche Rennes - Bretagne Atlantique\\
Campus de Beaulieu, 35042 Rennes, France \\
hassan.maatouk@inria.fr
\end{center}
\vspace{2cm}

\textbf{Abstract}
Due to their flexibility, Gaussian processes (GPs) have been widely used in nonparametric function estimation.
A prior information about the underlying function is often available. For instance, the physical system (computer model output) may be known to satisfy inequality constraints with respect to some or all inputs. We develop a finite-dimensional approximation of GPs capable of incorporating inequality constraints and noisy observations for computer model emulators. It is based on a linear combination between Gaussian random coefficients and deterministic basis functions. By this methodology, the inequality constraints are respected in the entire domain. The mean and the maximum of the posterior distribution are well defined. A simulation study to show the efficiency and the performance of the proposed model in term of predictive accuracy and uncertainty quantification is included.

\bigskip

\noindent \textbf{Keywords }
Gaussian processes $\cdot$ inequality constraints $\cdot$ finite-dimensional approximation $\cdot$ uncertainty quantification $\cdot$ truncated Gaussian vector

\section{Introduction and related work}\label{intro}
In the estimation of nonparametric function, Gaussian processes (GPs) are the most popular choices. This is because of their flexibility and other nice properties. For instance, the conditional GP with linear equality constraints is still a GP \citep{cramer1967stationary}. Additionally, some inequality constraints (such as monotonicity and convexity) of output computer responses are related to partial derivatives. The partial derivatives of the GP remain GPs \citep{cramer1967stationary,opac-b1081425}. Incorporating an infinite number of linear inequality constraints (such as boundedness, monotonicity and convexity) into a GP model is a difficult problem. This is because the resulting conditional process is not a GP in general.\

Constrained GPs (or kriging) has been studied in the domain of geostatistics \citep{Freulon1993,kleijnen2012monotonicity}. In the literature, there are a variety of ways for incorporating linear inequality constraints into a GP emulator. In \citet{Abrahamsen2001}, the idea is based on a discrete location approximation. In that case, the inequality constraints are satisfied in a finite number of input locations. For monotonicity and isotonicity constraints, some methodologies are based on the knowledge of the derivatives of the GP at some input locations \citep{golchi2015monotone,journals/jmlr/RiihimakiV10,wang2016estimating}. As mentioned in \citet{wang2016estimating}, `only a modest number of virtual derivative points seems to be needed to effectively impose the desired shape constraint'. In \citet{lin2014bayesian}, Gaussian process projection is studied. A comparison with spline-based models is included. Recently, a new methodology based on a modification of the covariance function in Gaussian processes to correctly account for known linear constraints is developed in \citet{jidling2017linearly}. \

For monotone function estimations, using B-splines was firstly introduced by \citet{ramsay1988,ramsay1998estimating}. The idea is based on the integration of B-splines defined on a properly set of knots with positive coefficients to ensure monotonicity constraints. \citet{He96monotoneb-spline} take the same approach and suggest the calculation of the coefficients by solving a finite linear minimization problem. In \citet{delecroix1996functional}, nonparametric function estimation in a general cone is studied. Their method is based on a projection into a discretized version of the cone, using the theory of reproducing kernel Hilbert spaces. In \citet{shively2009bayesian}, a Bayesian approach to estimate nonparametric monotone functions using restricted splines is developed. In \citet{saarela2011method,doi:10.1198/106186007X180949}, the generalization of monotonic regression to multiple dimensions are studied.\

The methodology developed in the present paper is quite different. It is based on a finite-dimensional approximation of GPs (or a GP approximation) that converges uniformly pathwise. It can be seen as a linear combination between deterministic basis functions and Gaussian random coefficients, where the coefficients are not independent. The main idea is to choose the basis  functions such that the infinite number of inequality constraints on the GP approximation are equivalent to a finite number of constraints on the coefficients. Therefore, the simulation of the conditional GP approximation is reduced to the simulation of a Gaussian vector (random coefficients) restricted to convex sets which is a well-known problem with existing algorithms \citep{Botts,Chopin2011FST19607241960748,Maatouk2016,journals/sac/PhilippeR03,Robert}.\

The article is structured as follows. In Section~\ref{GPCE}, Gaussian processes for computer experiments, their derivative processes and the choice of covariance functions are briefly reviewed. In Section~\ref{FDAGPs}, a finite-dimensional approximation of GPs capable of incorporating inequality constraints and noisy observations is developed. Section~\ref{IllExamples} shows some simulated examples of the finite-dimensional approximation of GPs conditionally to inequality constraints (such as boundedness and monotonicity) and noisy observations in one and two dimensions. Section~\ref{SimStudy} investigates the performance of the proposed model in terms of predictive accuracy and uncertainty quantification.

\section{Gaussian processes for computer experiments}\label{GPCE}
The following model is considered 
\begin{equation*}
y=f(\boldsymbol{x}), \quad \boldsymbol{x}\in\mathbb{R}^d,
\end{equation*}
where the simulator response $y$ is assumed to be a deterministic real-valued function of the $d$-dimensional variable $\boldsymbol{x}=(x_1,\ldots,x_d)\in \mathbb{R}^d$. The true function is supposed to be continuous and evaluated at data of size $n$ (design of experiments) given by the rows of the $n\times d$ matrix $\boldsymbol{X}=\left(\boldsymbol{x}^{(1)},\ldots,\boldsymbol{x}^{(n)}\right)^\top$, where $\boldsymbol{x}^{(i)}\in \mathbb{R}^d, \ i=1\ldots,n$. In many practical situations, it is not possible to get exact evaluations of $y$ at the design of experiments, but rather pointwise noisy measurements. In such case, an approximate response $y(\boldsymbol{X})+\epsilon$ is available, where $\epsilon\sim\mathcal{N}(\boldsymbol{0},\sigma_{\text{noise}}^2\textbf{I})$ with $\sigma_{\text{noise}}^2$ the noise variance and $\textbf{I}$ the identity matrix. To simplify notations, we denote $\tilde{y}_i=y(\boldsymbol{x}^{(i)})+\epsilon_i$, $i=1,\ldots,n$. In the statistical framework, $y$ is viewed as a realization of a continuous GP
\begin{equation*}
Z(\boldsymbol{x})=\eta(\boldsymbol{x})+Y(\boldsymbol{x}), \quad \boldsymbol{x}\in\mathcal{D}\subset\mathbb{R}^d,
\end{equation*}
where $\mathcal{D}$ is a compact subset of $\mathbb{R}^d$ and the deterministic continuous function $\eta~: \ \boldsymbol{x}\in \mathbb{R}^d  \ \longrightarrow \ \eta (\boldsymbol{x})\in \mathbb{R}$ is the mean and $Y$ is a zero-mean GP with continuous covariance function 
\begin{equation*}
K~: \ (\boldsymbol{x},\boldsymbol{x}')\in\mathcal{D}\times\mathcal{D} \  \longrightarrow \ K(\boldsymbol{x},\boldsymbol{x}')\in\mathbb{R}.
\end{equation*} 
In that case, the GP can be written as $Z\sim \mathcal{GP}\left(\eta(\boldsymbol{x}),K(\boldsymbol{x},\boldsymbol{x}')\right)$. Conditionally to noisy observations $\boldsymbol{\tilde{y}}=(\tilde{y}_1,\ldots,\tilde{y}_n)^\top$, the process remains a GP
\begin{equation*}
Z(\boldsymbol{x}) \suchthat Z\left(\boldsymbol{X}\right)=\boldsymbol{\tilde{y}}\sim\mathcal{GP}
\left(\zeta(\boldsymbol{x}),\tau^2(\boldsymbol{x})\right),
\end{equation*}
where
\begin{eqnarray}\label{condGP}
&&\zeta(\boldsymbol{x})=\eta(\boldsymbol{x})+\boldsymbol{k}(\boldsymbol{x})^\top (\mathbb{K}+\sigma_{\text{noise}}^2\textbf{I})^{-1}\left(\boldsymbol{\tilde{y}}-\boldsymbol{\mu}\right); \\ \nonumber
&&\tau^2(\boldsymbol{x})=K(\boldsymbol{x},\boldsymbol{x})-\boldsymbol{k}(\boldsymbol{x})^\top (\mathbb{K}+\sigma_{\text{noise}}^2\textbf{I})^{-1}\boldsymbol{k}(\boldsymbol{x}),
\end{eqnarray}
and 
$\boldsymbol{\mu}=\eta(\boldsymbol{X})$ is the vector of trend values at the design of experiments,
$\mathbb{K}_{i,j}=K(\boldsymbol{x}^{(i)},\boldsymbol{x}^{(j)})$, $i,j=1,\ldots,n$ is the covariance matrix of $Z(\boldsymbol{X})$ and
$\boldsymbol{k}(\boldsymbol{x})=K(\boldsymbol{x}, \boldsymbol{X})$ is the vector of covariance between $Z\left(\boldsymbol{x}\right)$ and $Z\left(\boldsymbol{X}\right)$. Additionally, the covariance function between any two inputs is equal to
\begin{equation*}
C(\boldsymbol{x},\boldsymbol{x}')=\cov(Z(\boldsymbol{x}),Z(\boldsymbol{x}') \suchthat Z(\boldsymbol{X})=\boldsymbol{\tilde{y}})=K(\boldsymbol{x},\boldsymbol{x}')-\boldsymbol{k}(\boldsymbol{x})^\top(\mathbb{K}+\sigma_{\text{noise}}^2\textbf{I})^{-1}\boldsymbol{k}(\boldsymbol{x}'),
\end{equation*}
where $C$ is the covariance function of the conditional GP. The mean $\zeta(\boldsymbol{x})$ is called kriging mean prediction of $Z(\boldsymbol{x})$ based on the computer model outputs $Z\left(\boldsymbol{X}\right)=\boldsymbol{\tilde{y}}$ \citep{Rasmussen2005GPM1162254}.

\subsection{The choice of covariance function}
The covariance function $K$ controls the smoothness of the kriging metamodel. It must be chosen in the set of definite and positive kernels. In Table~\ref{kernels}, some popular covariance functions used in kriging methods are given. Notice that these covariance functions are placed in decreasing order of smoothness, the squared exponential covariance function corresponding to $\mathcal{C}^{\infty}$ function (i.e., the space of functions that admit derivatives of all orders) and the exponential covariance function to continuous one \citep{Rasmussen2005GPM1162254}.

\begin{table*}
\centering
\caption{Some popular covariance functions used in kriging methods.}
\begin{tabular}{*{3}{c}}
Name  &  Expression &  Class   \\   \hline
Squared exponential       & $\sigma^2\exp\left(-\frac{(x-x')^2}{2\theta^2}\right)$ 	& $\mathcal{C}^{\infty}$       \\
Mat\'ern 5/2   & $\sigma^2\left(1+\frac{\sqrt{5}\mid x-x'\mid}{\theta}+\frac{5(x-x')^2}{3\theta^2}\right)\exp\left(-\frac{\sqrt{5} | x-x'|}{\theta}\right)$
& $\mathcal{C}^2$      \\
Mat\'ern 3/2   &  $\sigma^2\left(1+\frac{\sqrt{3}\mid x-x'\mid}{\theta}\right)\exp\left(-\frac{\sqrt{3}|x-x'|}{\theta}\right)$   & $\mathcal{C}^1$      \\
Exponential    &  $\sigma^2\exp\left(-\frac{|x-x'|}{\theta}\right)$ & $\mathcal{C}^0$      \\  \hline 
\end{tabular}
\label{kernels}
\end{table*}

\subsection{Derivatives of Gaussian processes}
In this subsection, the paths of the GP $(Z(\boldsymbol{x}))_{\boldsymbol{x}\in\mathbb{R}^d}$ are assumed to be of class $\mathcal{C}^p$ (i.e., the space of functions that admit derivatives up to order $p$). This can be guaranteed if $K$ is smooth enough, and in particular if $K$ is of class $\mathcal{C}^{\infty}$ \citep{cramer1967stationary}. 
Since differentiation is a linear operator, the order partial derivatives of a GP remain GPs \citep{cramer1967stationary,opac-b1081425} and
\begin{eqnarray*}
E\left(\partial_{x_k}^pZ(\boldsymbol{x})\right)&=&\frac{\partial^p}{\partial x_k^p}\eta(\boldsymbol{x}),\\
\cov\left(\partial_{x_k}^pZ(\boldsymbol{x}^{(i)}),\partial_{x_{\ell}}^q Z(\boldsymbol{x}^{(j)})\right)&=&\frac{\partial^{p+q}}{\partial x_k^p\partial (x'_{\ell})^q}K(\boldsymbol{x}^{(i)},\boldsymbol{x}^{(j)}).
\end{eqnarray*}

\section{Finite-dimensional approximation of GPs}\label{FDAGPs}
Without loss of generality the input set is supposed the unit hypercube $\mathcal{D}=[0,1]^d$. The input set $\mathcal{D}$ is discretized uniformly to $(N+1)^d$ knots. For example, in one dimension where $\mathcal{D}=[0,1]$, the discretization can be summarized as follow: $0=t_{N,0},\ldots,t_{N,N}=1$. Let $Y\sim\mathcal{GP}(0,K(\boldsymbol{x},\boldsymbol{x}'))$ be a zero-mean GP with covariance function $K$. The finite-dimensional approximation of Gaussian processes is defined as
\begin{eqnarray}\label{FiniteGP}
Y^N(\boldsymbol{x})&=&\sum_{i_1,\ldots,i_d=0}^N Y(t_{N,i_1},\ldots,t_{N,i_d})\prod_{m\in\{1,\ldots,d\}}\phi_{i_{m}}(x_{m})\\ \nonumber
&=&\zeta_{i_1,\ldots,i_d}\prod_{m\in\{1,\ldots,d\}}\phi_{i_{m}}(x_{m}), \quad \boldsymbol{x}\in\mathcal{D}\subset\mathbb{R}^d,
\end{eqnarray}
where $(\zeta_{i_1,\ldots,i_d})=(Y(t_{N,i_1},\ldots,t_{N,i_d}))^\top$ is a zero-mean Gaussian vector with covariance matrix $\Gamma^N$ and $\phi_{i_{m}}$ is the hat function associated to the knots $t_{N,i_m}$: $\phi_{i_{m}}(x)=\phi((x-t_{N,i_m})/\Delta_N)$, where $\Delta_N=1/N$ and $\phi(x)=\left(1-| x|\right)\mathbb{1}_{(| x|\leq 1)}, \ x\in\mathbb{R}$. The value of any basis function at any knot is equal to Kronecker's Delta function ($\phi_{i_{m}}(t_{N,i_{m'}})=\delta_{i_m,i_{m'}}, \ i_m,i_{m'}=0,\ldots,N$), where $\delta_{i_m,i_{m'}}$ is equal to one if $i_m=i_{m'}$ and zero otherwise. The covariance function $K_N(\boldsymbol{x},\boldsymbol{x}')$ of the Gaussian process approximation $Y^N$ is equal to
\begin{equation*}
K_N(\boldsymbol{x},\boldsymbol{x}')=\Phi(\boldsymbol{x})^\top\Gamma^N\Phi(\boldsymbol{x}'),
\end{equation*}
where $\Phi(\boldsymbol{x})=(\prod_{m\in\{1,\ldots,d\}}\phi_{i_{m}}(x_{m}))_{i_m}$. This type of covariance functions are very similar to ones used in \citet{cressie2008fixed}, where $\Gamma^N$ is a square positive definite matrix estimated from the data, which it is not the case in the present paper. By this approach \eqref{FiniteGP}, simulate the GP approximation is equivalent to simulate the Gaussian vector $(\zeta_{i_1,\ldots,i_d})_{i_1,\ldots,i_d}$ restricted to 
\begin{eqnarray*}
&&Y^N(\boldsymbol{x}^{(i)})=y_i+\epsilon_i=\tilde{y}_i, \quad i=1,\ldots,n,\\ 
&&(\zeta_{i_1,\ldots,i_d})_{i_1,\ldots,i_d}\in C_{\text{coef}},
\end{eqnarray*}
where $\epsilon_i\overset{i.i.d.}\sim \mathcal{N}(0,\sigma_{\text{noise}}^2)$ and $C_{\text{coef}}$ is the space of coefficients which verify some linear constraints. Next, we show how $C_{\text{coef}}$ can be computed in each inequality constraints case. In this paper, boundedness, monotonicity and convexity constraints are considered but the methodology can be easily adapted to any convex sets. \

Notice that, model \eqref{FiniteGP} does not correspond to a truncated Karhunen-Lo\`eve expansion $Y(x)=\sum_{j=0}^{+\infty}Z_je_j(x)$; see, for example, \citet{Rasmussen2005GPM1162254,trecate1999finite} since the coefficients $\zeta_j$ are not independent (unlike the coefficients $Z_j$) and the basis functions $\phi_j$ are not the eigenfunctions $e_j$ of the Mercer kernel $K(x,x')$.

\subsection{Boundedness constraints}\label{BoundednessSection}
The one dimension is a particular case of the two dimensional one. The input $\boldsymbol{x}=(x_1,x_2)\in \mathbb{R}^2$ and without loss of generality in the unit square $\mathcal{D}=[0,1]^2$. The real function is supposed continuous and belong to the convex set
\begin{equation*}
C=\left\{f\in \mathcal{C}^0\left(\mathcal{D}\right)~: \ -\infty\leq a\leq f(\boldsymbol{x})\leq b\leq +\infty, \ \boldsymbol{x}\in \mathcal{D}\right\}.
\end{equation*}
The finite-dimensional approximation of GPs $\left(Y^N(\boldsymbol{x})\right)_{\boldsymbol{x}\in \mathcal{D}}$ is defined as
\begin{equation*}\label{boundedness2Dapproach}
Y^N(x_1,x_2)=\sum_{i,j=0}^NY(t_{N,i},t_{N,j})\phi_i(x_1)\phi_j(x_2)=\sum_{i,j=0}^N\zeta_{i,j}\phi_i(x_1)\phi_j(x_2),
\end{equation*}
where $\zeta_{i,j}=Y(t_{N,i},t_{N,j})$ and $(\phi_j)_j$ are the hat functions. Then, $Y^N$ is bounded between $a$ and $b$ with respect to the two inputs \textit{if and only if} the $(N+1)^2$ random coefficients $\zeta_{i,j}\in[a,b]$. 
This is because $Y^N$ is a piecewise-linear function. In that case, the space of inequality constraints on the coefficients is equal to
\begin{equation*}
C_{\text{coef}}=\left\{(\zeta_{i,j})_{i,j}\in\mathbb{R}^{(N+1)^2} \ : \ \zeta_{i,j}\in [a,b], \ i,j=0,\ldots,N\right\}.
\end{equation*}

\begin{rem}
The multidimensional case is an easy extension of two dimensional one. The input $\boldsymbol{x}\in\mathcal{D}\subset [0,1]^d$. The model defined in \eqref{FiniteGP} is bounded between $a$ and $b$ (i.e., $Y^N(\boldsymbol{x})\in[a,b]$) \textit{if and only if} the random coefficients $(\zeta_{i_1,\ldots,i_d})=(Y(t_{N,i_1},\ldots,t_{N,i_d}))\in[a,b]$. 
\end{rem}

\begin{prop}\label{boundaryproposition}
If the realizations of the original GP $Y$ are continuous, then the finite-dimensional approximation of GPs $Y^N$ is almost surely converge uniformly to $Y$ when $N$ tends to infinity.
\end{prop}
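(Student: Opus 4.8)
The plan is to reduce the probabilistic claim to a deterministic, pathwise approximation statement. I would fix a realization $\omega$ belonging to the event $A=\{\omega : \boldsymbol{x}\mapsto Y(\boldsymbol{x},\omega)\text{ is continuous on }\mathcal{D}\}$, which has probability one by hypothesis. On such an $\omega$ the map $Y(\cdot,\omega)$ is continuous on the compact set $\mathcal{D}=[0,1]^d$, hence uniformly continuous, and $Y^N(\cdot,\omega)$ is exactly its piecewise-multilinear interpolant on the uniform grid of mesh $\Delta_N=1/N$. It then suffices to show that this interpolant converges to $Y(\cdot,\omega)$ in the supremum norm as $N\to\infty$, a purely deterministic fact for any continuous function.

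First I would record the two structural properties of the tensor-product hat basis that drive the argument. Each basis function $\Phi_{\boldsymbol{i}}(\boldsymbol{x})=\prod_{m=1}^d\phi_{i_m}(x_m)$ is nonnegative, and the family forms a partition of unity,
\begin{equation*}
\sum_{i_1,\ldots,i_d=0}^N\prod_{m=1}^d\phi_{i_m}(x_m)=\prod_{m=1}^d\Bigl(\sum_{i_m=0}^N\phi_{i_m}(x_m)\Bigr)=1,
\end{equation*}
since the one-dimensional hat functions sum to one on $[0,1]$. Moreover, for a fixed $\boldsymbol{x}$ lying in a grid cell, only the at most $2^d$ basis functions attached to the vertices of that cell are nonzero, and the interpolation property $\phi_{i_m}(t_{N,i_{m'}})=\delta_{i_m,i_{m'}}$ gives $Y^N(\boldsymbol{t})=Y(\boldsymbol{t})$ at every knot $\boldsymbol{t}$.

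Next, using the partition of unity to write $Y(\boldsymbol{x})=\sum_{\boldsymbol{i}}\Phi_{\boldsymbol{i}}(\boldsymbol{x})Y(\boldsymbol{x})$, I would bound the error by a convex combination,
\begin{equation*}
\bigl|Y^N(\boldsymbol{x})-Y(\boldsymbol{x})\bigr|=\Bigl|\sum_{\boldsymbol{i}}\Phi_{\boldsymbol{i}}(\boldsymbol{x})\bigl(Y(\boldsymbol{t}_{\boldsymbol{i}})-Y(\boldsymbol{x})\bigr)\Bigr|\leq\max_{\boldsymbol{t}}\bigl|Y(\boldsymbol{t})-Y(\boldsymbol{x})\bigr|,
\end{equation*}
where $\boldsymbol{t}_{\boldsymbol{i}}=(t_{N,i_1},\ldots,t_{N,i_d})$ and the maximum runs over the vertices $\boldsymbol{t}$ of the cell containing $\boldsymbol{x}$; the inequality uses nonnegativity and that the weights sum to one. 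Each such vertex satisfies $\|\boldsymbol{t}-\boldsymbol{x}\|\leq\sqrt{d}\,\Delta_N$. Given $\varepsilon>0$, uniform continuity of $Y(\cdot,\omega)$ supplies $\delta>0$ with $|Y(\boldsymbol{s})-Y(\boldsymbol{s}')|<\varepsilon$ whenever $\|\boldsymbol{s}-\boldsymbol{s}'\|<\delta$; choosing $N$ so large that $\sqrt{d}\,\Delta_N<\delta$ makes the right-hand side smaller than $\varepsilon$ uniformly in $\boldsymbol{x}\in\mathcal{D}$. Hence $\sup_{\boldsymbol{x}\in\mathcal{D}}|Y^N(\boldsymbol{x})-Y(\boldsymbol{x})|\to 0$ for every $\omega\in A$, and since $P(A)=1$ the uniform convergence holds almost surely. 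Because $Y^N-Y$ is continuous on $A$, its supremum over $\mathcal{D}$ equals the supremum over a countable dense subset, so the uniform norm is a genuine random variable and the convergence event is measurable.

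The argument is essentially elementary once the problem is decoupled into a pathwise statement; the only real work is the multidimensional bookkeeping needed to establish the partition-of-unity identity from the tensor-product structure and to identify the $2^d$ active vertices around a given point. The probabilistic content is confined to invoking the hypothesis $P(A)=1$ and checking measurability of the sup-norm, both of which are routine.
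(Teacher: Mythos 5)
Your proof is correct, and its core mechanism is exactly the paper's: exploit nonnegativity and the partition-of-unity property of the hat basis to write the error $Y^N-Y$ as a convex combination of increments of $Y$ over a mesh of size $\Delta_N$, then invoke uniform continuity of the sample path on the compact domain to drive the modulus of continuity to zero, pathwise on the probability-one event of continuous realizations. The one genuine difference is scope: the paper's proof is carried out only in one dimension, $Y^N(x;\omega)=\sum_{j=0}^N Y(t_{N,j};\omega)\phi_j(x)$ on $[0,1]$, with the bound $|Y^N(x;\omega)-Y(x;\omega)|\leq \sup_{|x-x'|\leq\Delta_N}|Y(x';\omega)-Y(x;\omega)|$, whereas the proposition concerns the tensor-product model \eqref{FiniteGP} on $[0,1]^d$. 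You supply the missing multidimensional bookkeeping --- the product factorization $\sum_{\boldsymbol{i}}\prod_m\phi_{i_m}(x_m)=\prod_m\bigl(\sum_{i_m}\phi_{i_m}(x_m)\bigr)=1$, the restriction to the at most $2^d$ active vertices of the cell containing $\boldsymbol{x}$, and the vertex distance bound $\sqrt{d}\,\Delta_N$ --- so your argument actually proves the statement at the generality in which it is asserted. Your closing remark on measurability of the sup-norm (reducing the supremum to a countable dense set by path continuity) is a point the paper passes over silently; it is routine but worth having stated. No gaps.
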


\begin{proof}
To prove the almost sure uniform convergence of the approximating random process $Y^N$ to the limiting process
$Y$, write more explicitly, for any $\omega\in\Omega$
\begin{equation*}
Y^N(x;\omega)=\sum_{j=0}^NY(t_{N,j};\omega)\phi_j(x), \quad x\in\mathcal{D}=[0,1].
\end{equation*}
Using the fact that $\phi_j(x)\geq 0$ and $\sum_{j=0}^N\phi_j(x)=1$, for all $x\in \mathcal{D}$, we get
\begin{eqnarray*}
\left|Y^N(x;\omega)-Y(x;\omega)\right|&=&\left|\sum_{j=0}^N(Y(t_{N,j};\omega)-Y(x;\omega))\phi_j(x)\right| \nonumber \\
&\leq & \sum_{j=0}^N \sup_{|x-x'|\leq \Delta_N}\left|Y(x';\omega)-Y(x;\omega)\right|\phi_j(x) \nonumber \\
&=&\sup_{|x-x'|\leq \Delta_N}\left|Y(x';\omega)-Y(x;\omega)\right|. \label{boundinequality}
\end{eqnarray*}
Thus, one can deduce that
\begin{equation*}
\sup_{x\in \mathcal{D}}\left|Y^N(x;\omega)-Y(x;\omega)\right|\underset{N\to +\infty}\longrightarrow 0
\end{equation*}
with probability 1, since the sample paths of the process $Y$ are uniformly continuous on the compact interval $\mathcal{D}$. 
\end{proof}

\subsection{Isotonicity constraints}\label{MTD}
The isotonicity constraints in two dimensions are considered. The input $\boldsymbol{x}=(x_1,x_2)\in\mathbb{R}^2$ and without loss of generality in the unit square $\mathcal{D}=[0,1]^2$. The real function $f$ is supposed to be monotone (non-decreasing) with respect to the two inputs
\begin{equation*}
x_1\leq x'_1 \mbox{\quad and \quad} x_2\leq x'_2 \quad \Rightarrow \quad f(x_1,x_2)\leq f(x'_1,x'_2).
\end{equation*}
The finite-dimensional approximation of GPs $\left(Y^N(\boldsymbol{x})\right)_{\boldsymbol{x}\in \mathcal{D}^2}$ is defined as
\begin{equation}\label{monotonicity2Dapproach}
Y^N(x_1,x_2)=\sum_{i,j=0}^NY(t_{N,i},t_{N,j})\phi_i(x_1)\phi_j(x_2)=\sum_{i,j=0}^N\zeta_{i,j}\phi_i(x_1)\phi_j(x_2),
\end{equation}
where $\zeta_{i,j}=Y(t_{N,i},t_{N,j})$ and $(\phi_j)_j$ are the hat functions. Then, $Y^N$ is non-decreasing with respect to the two inputs \textit{if and only if} the $(N+1)^2$ random coefficients $\zeta_{i,j}$ verify the following linear constraints: 
\begin{enumerate}
\item $\zeta_{i-1,j}\leq \zeta_{i,j} 
\mbox{ and } \zeta_{i,j-1}\leq \zeta_{i,j}, \ i,j=1,\ldots,N$;
\item $\zeta_{i-1,0}\leq \zeta_{i,0}, \ i=1,\ldots,N$;
\item $\zeta_{0,j-1}\leq \zeta_{0,j}, \ j=1,\ldots,N$.
\end{enumerate}

\begin{rem}[Isotonicity with respect to one variable]\label{rem2D}
If the function is non-decreasing with respect to the first variable only, then model \eqref{monotonicity2Dapproach} is non-decreasing with respect to $x_1$ \textit{if and only if} the random coefficients verify: $\zeta_{i-1,j}\leq \zeta_{i,j}, \ i=1,\ldots,N \ \mbox{and} \ j=0,\ldots,N$.
\end{rem}

\begin{rem}[Monotonicity constraints]
For monotonicity constraints, the finite-dimensional approximation of GPs can be written as 
\begin{equation}\label{monotone1D}
Y^N(x)=Y(0)+\sum_{j=0}^NY'(t_{N,j})I_j(x)=\gamma+\sum_{j=0}^N\zeta_jI_j(x), \quad x\in[0,1],
\end{equation}
where $\gamma=Y(0)$, $\zeta_j=Y'(t_{N,j})$ and $I_j(x)=\int_0^x\phi_j(t)dt$. In that case, $Y^N$ is monotone \textit{if and only if} the random coefficients $\zeta_j$ are all nonnegative. In fact, since $(I_j)$ are non-decreasing functions and $(\zeta_j)$ are nonnegative, then $Y^N$ is non-decreasing. Conversely, if $Y^N$ is non-decreasing, then $\zeta_j=(Y^N)'(t_{N,j})\geq 0$. Thus, the space of inequality constraints on the coefficients is 
{\em \begin{equation*}
C_{\text{coef}}=\left\{\left(\gamma,\zeta\right)\in \mathbb{R}^{N+2}~: \ \zeta_j\geq 0, \ j=0,\ldots,N\right\},
\end{equation*}}
where $\zeta=(\zeta_0,\ldots,\zeta_N)^\top$. For all $x,x'\in \mathcal{D}=[0,1]$, the covariance function of $Y^N$ is equal to
\begin{eqnarray*}
K_N(x,x')&=&\cov\left(Y^N(x),Y^N(x')\right)=K(0,0)+\sum_{i=0}^N\frac{\partial K}{\partial x}(t_{N,i},0)I_i(x)\\
&+&\sum_{j=0}^N\frac{\partial K}{\partial x'}(0,t_{N,j})I_j(x)+\sum_{i,j=0}^N\frac{\partial^2 K}{\partial x\partial x'}(t_{N,i},t_{N,j})I_i(x)I_j(x').
\end{eqnarray*}
\end{rem}

\subsection{Convexity constraints}
For convexity in two dimensions, the finite-dimensional approximation of GPs defined as
\begin{equation*}
Y^N(x_1,x_2)=\sum_{i,j=0}^NY(t_{N,i},t_{N,j})\phi_i(x_1)\phi_j(x_2)=\sum_{i,j=0}^N\zeta_{i,j}\phi_i(x_1)\phi_j(x_2),
\end{equation*}
is convex with respect to the two inputs \textit{if and only if} the random coefficients verify
\begin{enumerate}
\item $\frac{\zeta_{i,j}-\zeta_{i-1,j}}{t_{N,i}-t_{N,i-1}}\leq \frac{\zeta_{i+1,j}-\zeta_{i,j}}{t_{N,i+1}-t_{N,i}} 
\mbox{ and } \frac{\zeta_{i,j}-\zeta_{i,j-1}}{t_{N,j}-t_{N,j-1}}\leq \frac{\zeta_{i,j+1}-\zeta_{i,j}}{t_{N,j+1}-t_{N,j}}, \ i,j=1,\ldots,N-1$;
\item $\frac{\zeta_{i,0}-\zeta_{i-1,0}}{t_{N,i}-t_{N,i-1}}\leq \frac{\zeta_{i+1,0}-\zeta_{i,0}}{t_{N,i+1}-t_{N,i}}, \ i=1,\ldots,N-1$;
\item $\frac{\zeta_{0,j}-\zeta_{0,j-1}}{t_{N,j}-t_{N,j-1}}\leq \frac{\zeta_{0,j+1}-\zeta_{0,j}}{t_{N,j+1}-t_{N,j}}, \ j=1,\ldots,N-1$.
\end{enumerate}

\begin{rem}[Convexity in one dimension]\label{CC}
If the realizations of the original GP $Y$ are assumed to be at least twice differentiable. Then, the finite-dimensional approximation of GPs can be defined as 
\begin{equation*}\label{convexityapproach}
Y^N(x)=Y(0)+Y'(0)x+\sum_{j=0}^NY''(t_{N,j})\varphi_j(x)=\gamma+\kappa x+\sum_{j=0}^N\zeta_j\varphi_j(x),
\end{equation*}
where $\gamma=Y(0)$, $\kappa=Y'(0)$ and $\zeta_j=Y''(t_{N,j})$. The basis functions $(\varphi_j)_j$ are the two times primitive functions of $\phi_j$ 
\begin{equation*}
\varphi_j(x)=\int_0^x\left(\int_0^t\phi_j(u)du\right)dt, \quad x\in \mathcal{D}.
\end{equation*}
In that case, $Y^N$ is convex \textit{if and only if} the random coefficient $Y''(t_{N,j})$ are all nonnegative. Thus, the space of inequality constraints on the coefficients {\em $C_{\text{coef}}$} is equal to
{\em \begin{equation*}
C_{\text{coef}}=\left\{\left(\gamma,\kappa,\zeta\right)\in \mathbb{R}^{N+3}~: \ \zeta_j\geq 0, \ j=0,\ldots,N\right\},
\end{equation*}}
where $\zeta=(\zeta_0,\ldots,\zeta_N)^\top$. The covariance function of the GP approximation is equal to
\begin{equation*}
K_N(x,x')=\left(1,x,\varphi(x)^\top\right)\tilde{\Gamma}^N\left(1,x',\varphi(x')^\top\right)^\top, 
\end{equation*}
where $\varphi(x)=\left(\varphi_0(x),\ldots,\varphi_N(x)\right)^\top$ and 
\begin{equation*}
\tilde{\Gamma}^N=\left[\begin{matrix}
K(0,0) & \frac{\partial K}{\partial x'}(0,0) & \frac{\partial^2 K}{\partial (x')^2}(0,t_{N,j})\\
\frac{\partial K}{\partial x}(0,0) & \frac{\partial^2K}{\partial x\partial x'}(0,0) & 
\frac{\partial^3K}{\partial x\partial (x')^2}(0,t_{N,j})\\
\frac{\partial^2 K}{\partial x^2}(t_{N,i},0) & \frac{\partial^3K}{\partial x^2\partial x'}(t_{N,i},0) &\Gamma^N_{i,j}
\end{matrix}
\right]_{0\leq i,j\leq N},
\end{equation*}
\noindent and
\begin{equation*}
\Gamma^N_{i,j}=\cov(Y''(t_{N,i}),Y''(t_{N,j}))=\frac{\partial^4K}{\partial x^2\partial (x')^2}(t_{N,i},t_{N,j}), \quad i,j=0,\ldots,N.
\end{equation*}
\end{rem}

\subsection{Simulated paths}
This subsection is devoted to the sampling scheme of the proposed model conditionally to inequality constraints and noisy observations. To simplify notations, the finite-dimensional approximation of GPs in one dimension is considered
\begin{equation*}
Y^N(x)=\sum_{j=0}^NY(t_{N,j})\phi_j(x)=\sum_{j=0}^N\zeta_j\phi_j(x), \quad x\in\mathcal{D}.
\end{equation*}
In this paper, the GP is observed with error. The space of noisy observations is defined as
\begin{eqnarray*}
I_{\text{coef}}&=&\left\{\zeta\in\mathbb{R}^{N+1}~:  \sum_{j=0}^N\zeta_j\phi_j(x^{(i)})=\tilde{y}_i, \ i=1,\ldots,n\right\}\\
&=&\left\{\zeta\in\mathbb{R}^{N+1}~:  A\zeta=\boldsymbol{\tilde{y}}\right\},
\end{eqnarray*}
where $\tilde{y}_i=y_i+\epsilon_i, \ i=1,\ldots,n$, $\epsilon_i\overset{i.i.d.}\sim \mathcal{N}(0,\sigma_{\text{noise}}^2)$ and $A_{i,j}=\phi_j(x^{(i)})$.
The set of inequality constraints on the coefficients $C_{\text{coef}}$ is a convex set (for instance, the nonnegative quadrant $\zeta_j\geq 0, \ j=0,\ldots,N$ for non-decreasing constraints in one dimension). The sampling scheme can be summarized in two steps: first, the conditional Gaussian vector $\zeta$ with only noisy observations is simulated
\begin{equation*}\label{intXi}
\footnotesize
\zeta \suchthat A\zeta=\boldsymbol{\tilde{y}} \sim \mathcal{N}\left((A\Gamma^N)^\top(A\Gamma^NA^\top+\sigma_{\text{noise}}^2\textbf{I})^{-1}\boldsymbol{\tilde{y}},\Gamma^N-(A\Gamma^N)^\top (A\Gamma^NA^\top+\sigma_{\text{noise}}^2\textbf{I})^{-1}A\Gamma^N\right).
\end{equation*}
Second, by an improved rejection sampling \citep{Maatouk2016}, only the random coefficients in the convex set $C_{\text{coef}}$ are selected. Now, the three estimates used in the illustrative examples (Section~\ref{IllExamples}) are defined.

\begin{defi}\label{unconstrMean}
The so-called unconstrained mean is defined as
{\em \begin{eqnarray*}
m^N(x)=E\left(Y^N(x) \suchthat Y^N(x^{(i)})=\tilde{y}_i, \ i=1,\ldots,n\right)=\phi(x)^\top\zeta_{\text{I}},
\end{eqnarray*} }
where {\em $\zeta_{\text{I}}=E\left(\zeta \suchthat \zeta\in I_{\text{coef}}\right)=\Gamma^NA^\top\left(A\Gamma^NA^\top+\sigma_{\text{noise}}^2\textbf{I}\right)^{-1}\boldsymbol{\tilde{y}}$}. 
\end{defi}
Similarly to the kriging mean of the original GP $Y$ (Eq.~\eqref{condGP}, $Z=Y$ when $\eta$ is the null function), the kriging mean $m^N$ of the finite-dimensional approximation of GPs $Y^N$ can be written as
\begin{equation*}
m^N(x)=\boldsymbol{k}_N(x)^\top\left(\mathbb{K}_N+\sigma_{\text{noise}}^2\textbf{I}\right)^{-1}\boldsymbol{\tilde{y}},
\end{equation*}
where $\boldsymbol{k}_N(x)=K_N(x,\boldsymbol{X})=\left(A\Gamma^N\phi(x)\right)$ is the vector of covariance between $Y^N(x)$ and $Y^N\left(\boldsymbol{X}\right)$ and $(\mathbb{K}_N)_{i,j}=K_N(x^{(i)},x^{(j)})=\left(A\Gamma^NA^\top\right)_{i,j}$, $i,j=1,\ldots,n$ is the covariance matrix of $Y^N\left(\boldsymbol{X}\right)$.  

\begin{rem}\label{unconstrained_remark}
The unconstrained mean $m^N(x)$ respects inequality constraints in the entire domain \textit{if and only} if the conditional Gaussian vector to only noisy observations {\em $\zeta_{\text{I}}$} lies inside the convex set {\em $C_{\text{coef}}$}.
\end{rem}
 
\begin{defi}\label{IKMdef}
The mean of the posterior distribution of $Y^N$ conditionally to inequality constraints and noisy observations is defined as 
{\em \begin{equation*}
m^N_{\text{pos}}(x)=E\left(Y^N(x)\suchthat Y^N(x^{(i)})=\tilde{y}_i, \ \zeta\in C_{\text{coef}}\right)=\phi(x)^\top\zeta_{\text{pos}},
\end{equation*}} 
where {\em $\zeta_{\text{pos}}=E\left(\zeta \suchthat \zeta\in I_{\text{coef}}\cap C_{\text{coef}}\right)$} is the mean of the truncated Gaussian vector which is computed from simulations.
\end{defi}

Finally, let $\mu$ be the maximum of the probability density function (pdf) of $\zeta$ restricted to $I_{\text{coef}}\cap C_{\text{coef}}$. It is the solution of the following convex optimization problem
\begin{equation}\label{mu}
\mu=\arg\min_{x\in I_{\text{coef}}\cap C_{\text{coef}}}\left(\frac{1}{2}x^\top\left(\Gamma^N\right)^{-1}x\right),
\end{equation}
where $\Gamma^N$ is the covariance matrix of the Gaussian vector $\zeta$. The quadratic optimization problem \eqref{mu} is equivalent to 
\begin{equation}\label{muBIS}
\mu=\arg\min_{x\in C_{\text{coef}}}\left(\frac{1}{2}x^\top\left(\Gamma_{\text{cond}}^N\right)^{-1}x+\zeta_{\text{I}}^\top x\right),
\end{equation}
where $\Gamma_{\text{cond}}^N$ is the covariance matrix of the conditional Gaussian vector $\zeta \suchthat A\zeta=\boldsymbol{\tilde{y}}$. In fact, $\mu$ representes the maximum of the pdf of the Gaussian vector $\zeta$ restricted to $I_{\text{coef}}\cap C_{\text{coef}}$ and its numerical calculation is a standard problem in the minimization of positive quadratic forms subject to convex constraints \citep{Boyd2004CO993483,Goldfarb83}. Let us mention that in all simulated examples illustrated in this paper, the R-package `solve.QP' described in \citet{Goldfarb83} is used to solve the quadratic convex optimization problems \eqref{mu}-\eqref{muBIS}.

\begin{defi}\label{Mdef}
The maximum of the posterior distribution of $Y^N$ conditionally to inequality constraints and noisy observations is defined as
{\em \begin{equation*}
M^N_{\text{pos}}(x)=\sum_{j=0}^N\mu_j\phi_j(x),\quad x\in\mathbb{R}^d,
\end{equation*} }
where $\mu=(\mu_0,\ldots,\mu_N)^\top$ is computed by \eqref{muBIS}. 
\end{defi}

\begin{rem}
The maximum \textit{a posteriori} estimate {\em $M^N_{\text{pos}}$} does not depend on the variance hyper-parameter $\sigma$ of the covariance function $K$ as well as on the simulations but depends on the length hyper-parameters of the covariance function $\boldsymbol{\theta}=(\theta_1,\ldots,\theta_d)$.  
\end{rem}

\begin{rem}\label{correspondRemark}
In the case where the GP is observed without error (i.e., with noise-free data), the maximum \textit{a posteriori} estimate {\em $M^N_{\text{pos}}$} converges uniformly to the constrained interpolation function solution of the following convex optimization problem
\begin{equation*}
\arg\min_{h\in H\cap I\cap C}\|h\|_H^2,
\end{equation*}
where $H$ is the reproducing kernel Hilbert space (RKHS) associated to the positive type kernel $K$ \citep{aronszajn1950,berlinet2011reproducing}, $I$ is the set of functions verify interpolation conditions and the convex set $C$ is the space of functions verify inequality constraints \citep{bay2016,Bay2017}. \

This generalizes to the case of interpolation conditions and inequality constraints the well known correspondence established by \citet{KW1970} between Bayesian estimation on stochastic process and smoothing by splines. 
\end{rem}

\begin{algorithm}[t]
  \caption{Sampling scheme}
  \label{alg:rejection_sampling}
 \DontPrintSemicolon
 \textbf{Initialization}: \\
  $\zeta\notin C_{\text{coef}}$; $\zeta \gets \zeta_{\text{current}}$ \\
  $1 \gets \text{unif}$; $0 \gets t$ \\
  \While{{\em $\text{unif}>t$}}{
  $\zeta \gets \zeta_{\text{current}}$ \\
  \While{{\em $\zeta_{\text{current}}\notin C_{\text{coef}}$}}{
 $ \mathcal{N}(\mu,\Gamma^N_{\text{cond}}) \gets \zeta_{\text{current}}$
 }
$ \exp\left(\mu^\top(\Gamma^N_{\text{cond}})^{-1}(\mu-\zeta_{\text{I}}-\zeta_{\text{current}})+\zeta_{\text{current}}^\top(\Gamma^N_{\text{cond}})^{-1}\zeta_{\text{I}}\right) \gets t$ \\
$\mathcal{U}(0,1)\gets \text{unif}$
}
\end{algorithm}

In Algorithm~\ref{alg:rejection_sampling}, the sampling scheme of the proposed model is described. It is based on the rejection sampling from the Mode (RSM) algorithm to simulate the Gaussian vector $\zeta$ restricted to the convex set $I_{\text{coef}}\cap C_{\text{coef}}$ (see, \citet{Maatouk2016} for more details).

\section{Illustrative examples}\label{IllExamples}
The goal of this section is twofold: first, to illustrate the condition simulation of the GP approximation developed in the present paper with certain constraints such as boundedness, positivity and monotonicity in one and two dimensions and noisy observations. Second, to describe the two different cases in the simulation.
\begin{itemize}
\item The unconstrained mean respects the constraints and then coincides with the maximum of the posterior distribution. 
\item The unconstrained mean does not respect the constraints, then the unconstrained mean and the maximum of the posterior distribution are different.
\end{itemize}
The Mat\'ern 3/2 and squared exponential (or Gaussian) covariance functions are used (Table~\ref{kernels}).

\subsection{Boundedness constraints}
The real function is supposed to respect boundedness constraints
\begin{equation}\label{BoundedSpace}
C=\left\{f\in\mathcal{C}^0([0,1]) \ : \ -\infty\leq a\leq f(x)\leq b\leq +\infty, \ x\in[0,1]\right\}.
\end{equation}
The constrained data of size $n=10$ (black points in Fig.~\ref{postive_bounded}) are not taken from constrained functions. The noise variance is fixed to $\sigma_{\text{noise}}^2=1.1^2$. Additionally, the Mat\'ern 3/2 covariance function is used with the hyper-parameters fixed to $(\theta,\sigma)=(0.3,10)$. 

\begin{figure}[h]
\begin{minipage}{.5\linewidth}
\centering
\subfloat[]{\label{positive}\includegraphics[scale=.3]{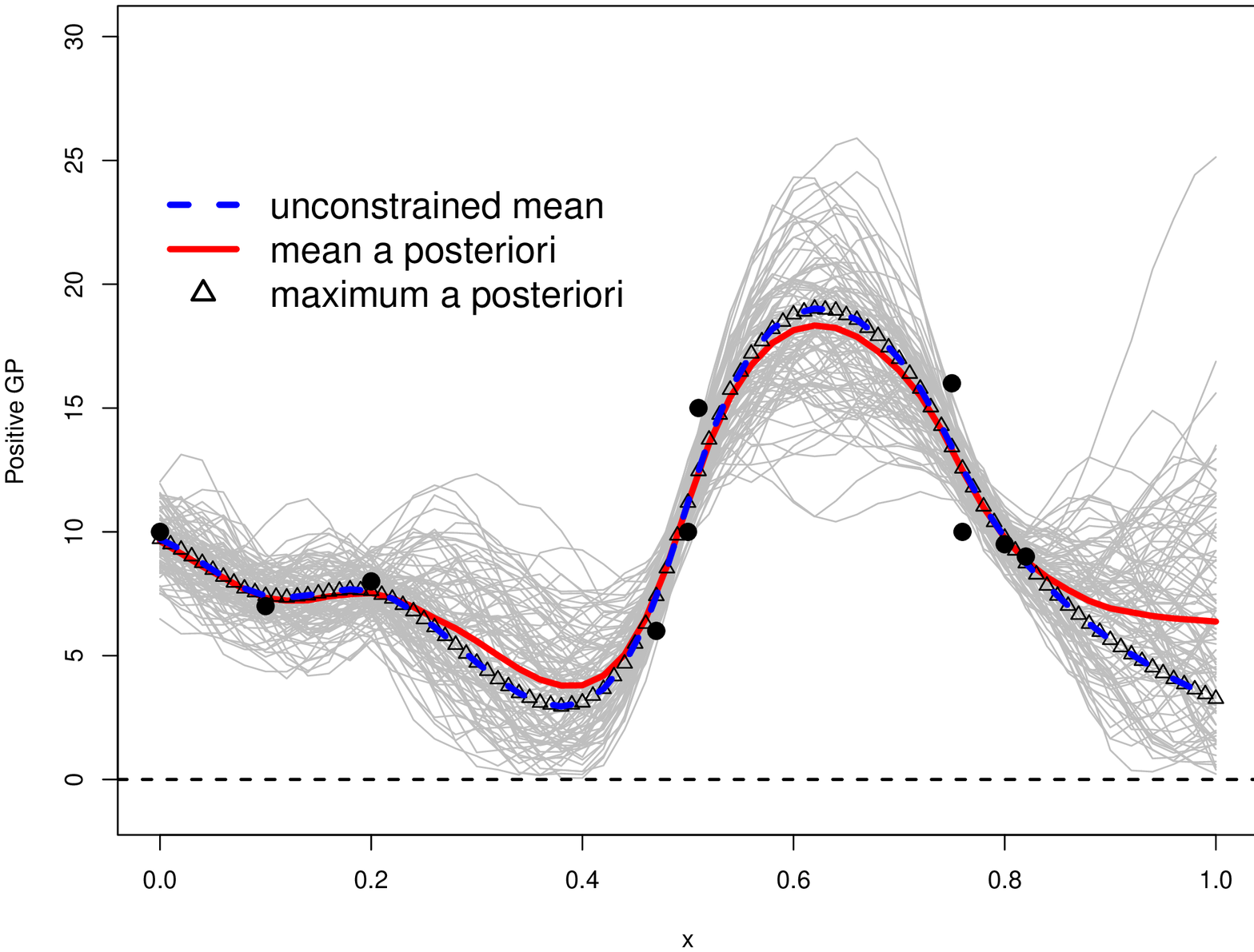}}
\end{minipage}%
\begin{minipage}{.5\linewidth}
\centering
\subfloat[]{\label{bornee}\includegraphics[scale=.3]{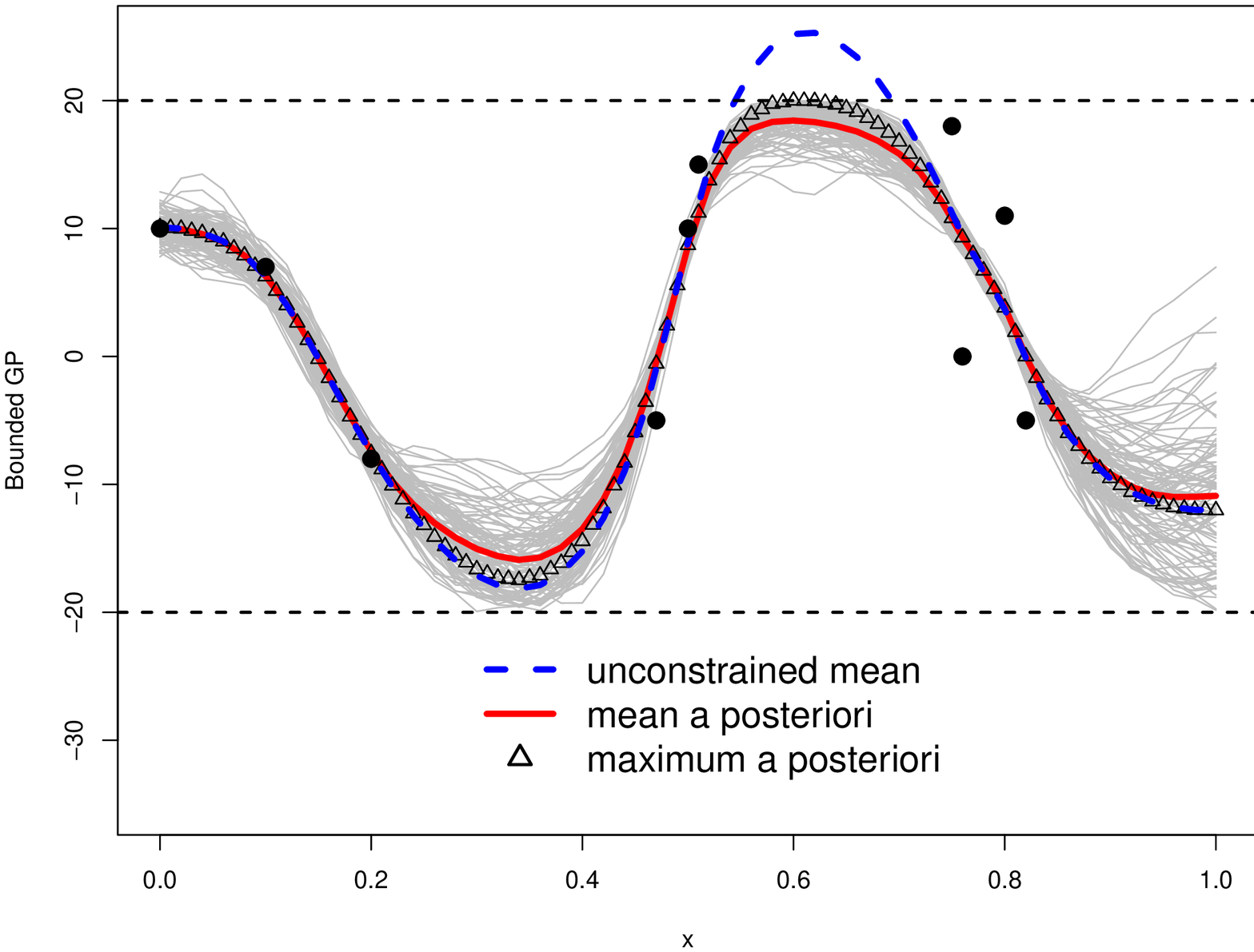}}
\end{minipage}
\caption{The GP approximation with positivity constraints (a) and boundedness constraints (b). The unconstrained mean coincides with the maximum \textit{a posteriori} in (a) but not in (b).}
\label{postive_bounded}
\end{figure}

In Fig.~\ref{positive}, we generate one hundred sample paths taken from model~\eqref{FiniteGP} with $d=1$ and $N=50$ conditionally to positivity constraints (i.e., $a=0$ and $b=+\infty$ in \eqref{BoundedSpace}). The simulated trajectories (gray lines) respect positivity constraints in the entire domain as well as the mean of the posterior distribution. The unconstrained mean and the maximum of the posterior distribution coincide and respect positivity constraints in the entire domain as well: it corresponds to the situation where the conditional Gaussian vector $\zeta_{\text{I}}$ lies inside the acceptance region $C_{\text{coef}}$ (Remark~\ref{unconstrained_remark}). In Fig.~\ref{bornee}, the boundedness constraint is considered (i.e., $a=-20$ and $b=20$ in \eqref{BoundedSpace}). The simulated trajectories (gray lines) respect boundedness constraints in the entire domain as well as the mean  and the maximum of the posterior distribution, contrarily to the unconstrained mean. This is the case where $\zeta_{\text{I}}$ lies outside the acceptance region $C_{\text{coef}}$ (Remark~\ref{unconstrained_remark}). This numerical result can be seen as a generalization of the Kimeldorf-Wahba correspondence \citet{KW1970} in the case of inequality constraints and errors measurements between Bayesian estimation on stochastic process and smoothing by splines.

\subsection{Monotonicity constraints}
The monotone (non-decreasing) function $f(x)=0.32(x+sin(x)), \ x\in [0,10]$ used in the literature to compare different models is considered. It is evaluated at data of size $n=50$ chosen randomly on $[0,10]$ (black points in Fig.~\ref{monotone}) with standard deviation $\sigma_{\text{noise}}=1$. 

\begin{figure}[h]
\centering
\includegraphics[scale=.35]{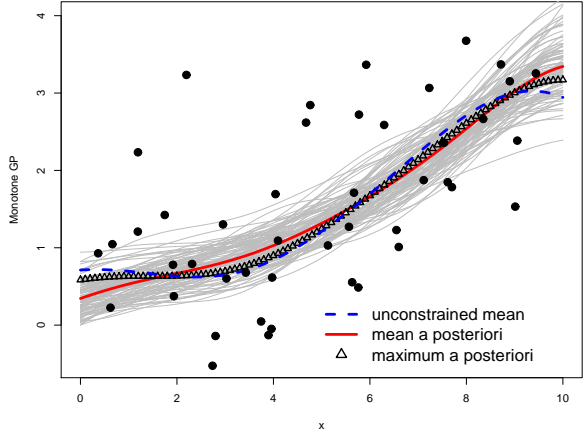}
\caption{The GP approximation~\eqref{monotone1D} with monotonicity constraints for sinusoidal function $f(x)=0.32(x+sin(x))$. The unconstrained mean does not coincide with the maximum \textit{a posteriori}.}
\label{monotone}
\end{figure}

In Fig.~\ref{monotone}, we generate one hundred sample paths taken from model~\eqref{monotone1D} with $N=50$ conditionally to monotonicity (non-decreasing) constraints. The squared exponential covariance function is used with hyper-parameters $(\theta,\sigma)=(2.5,1)$. Notice that, the simulated trajectories (gray lines) are non-decreasing in the entire domain as well as the mean and the maximum of the posterior distribution, contrarily to the unconstrained mean. It corresponds to the case where the conditional Gaussian vector $\zeta_{\text{I}}$ lies outside the acceptance region $C_{\text{coef}}$ (Remark~\ref{unconstrained_remark}).

\subsection{Isotonicity in two dimensions}
In two dimensions, the monotone (non-decreasing) function with respect to the two inputs used in \citet{saarela2011method,shively2009bayesian} 
\begin{equation*}
f(x_1,x_2)=\mathbb{1}_{\{(x-1-1)^2+(x_2-1)^2<1\}}\{1-(x_1-1)^2-(x_2-1)^2\}^{1/2}, \quad (x_1,x_2)\in [0,1]^2
\end{equation*}
is considered. It is evaluated at data of size $n=100$ chosen randomly on $[0,1]^2$ with standard deviation $\sigma_{\text{noise}}=0.1$. In Fig.~\ref{monotone2Dcontour}, the two-dimensional squared exponential covariance function is used
\begin{equation}\label{SE2D}
K(\boldsymbol{x},\boldsymbol{x}')=\sigma^2\exp\left(-\frac{(x_1-x'_1)^2}{2\theta_1^2}\right)\times\exp\left(-\frac{(x_2-x'_2)^2}{2\theta_2^2}\right),
\end{equation}
where the variance hyper-parameter $\sigma=1$ and the length hyper-parameters $(\theta_1,\theta_2)=(0.02,0.17)$ are estimated using cross-validation methods \citep{Maatouk201538}. Figure~\ref{monotone2Dcontour} shows the maximum of the posterior distribution using model~\eqref{monotonicity2Dapproach} with $N=10$ and the associated contour levels. It respects monotonicity (non-decreasing) constraints with respect to the two inputs.

\begin{figure}[h]
\begin{minipage}{.5\linewidth}
\centering
\subfloat[]{\includegraphics[scale=.3]{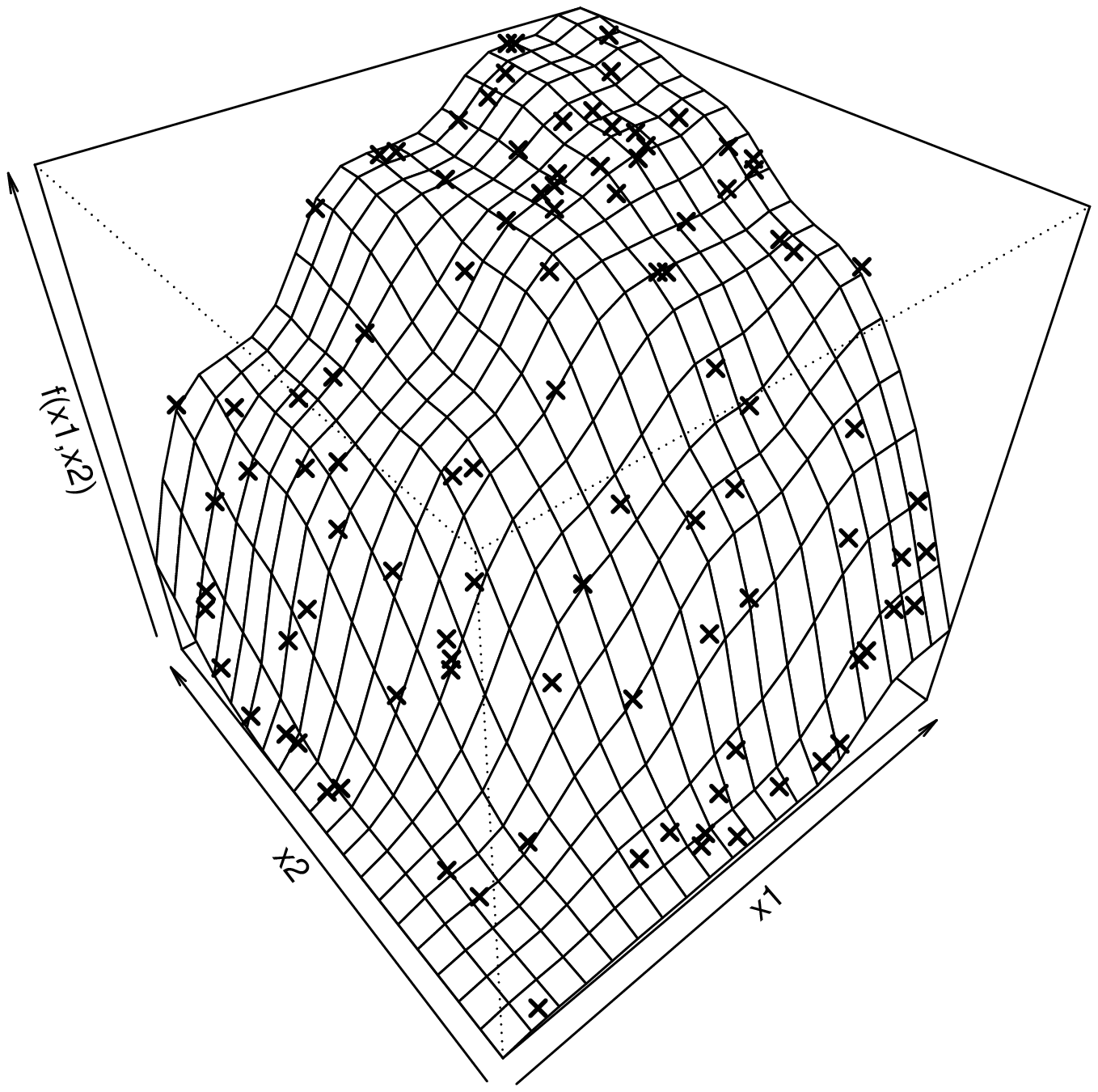}}
\end{minipage}%
\begin{minipage}{.5\linewidth}
\centering
\subfloat[]{\includegraphics[scale=.25]{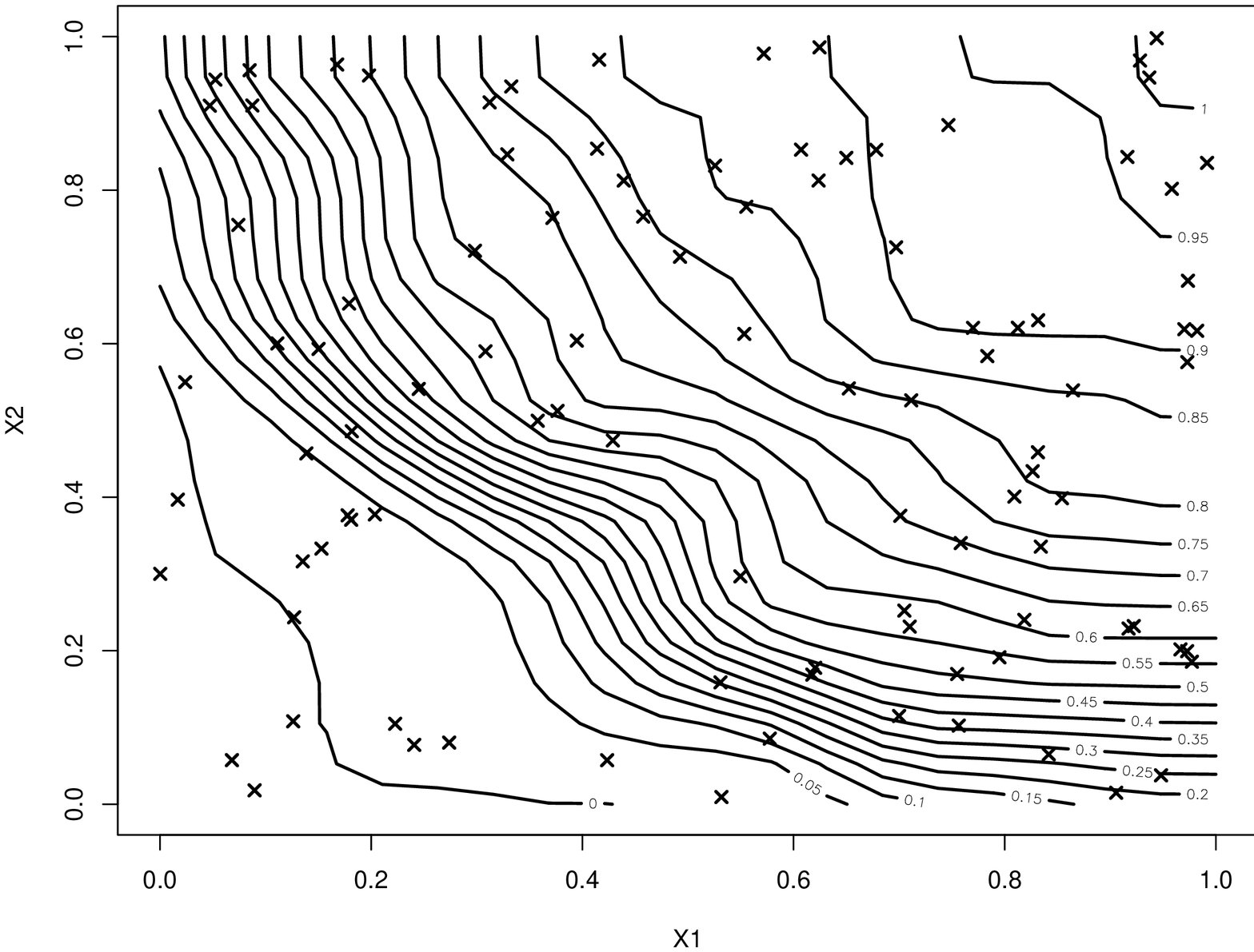}}
\end{minipage}
\caption{The maximum of the posterior distribution drawn from model~\eqref{monotonicity2Dapproach} respecting monotonicity (non-decreasing) constraints for the two inputs, and the associated contour levels.}
\label{monotone2Dcontour}
\end{figure}

\begin{rem}
For monotonicity with respect to only one variable, model~\eqref{monotonicity2Dapproach} (with noise-free data) has been used in \citet{Cousin2016} to estimate the discount factor surface as a function of time-to-maturities and quotation dates. It is a monotone (non-increasing) function with respect to time-to-maturities at each quotation date.
\end{rem}

\section{Simulation study}\label{SimStudy}
In this section, a comparison between the finite-dimensional approximation of GPs developed in the present paper and models deal with monotonicity and isotonicity constraints is shown. The real non-decreasing functions proposed by \citet{holmes2003generalized,neelon2004bayesian} and used in a comparative study by \citet{shively2009bayesian,lin2014bayesian} are considered
\begin{itemize}
\item flat function $f_1(x)=3, \ x\in (0,10]$;
\item sinusoidal function $f_2(x)=0.32\{x+sin(x)\}, \ x\in (0,10]$;
\item step function $f_3(x)=3$ if $x\in (0,8]$ and $f_3(x)=8$ if $x\in (8,10]$;
\item linear function $f_4(x)=0.3x, \ x\in (0,10]$;
\item exponential function $f_5(x)=0.15\exp(0.6x-3), \ x\in(0,10]$;
\item logistic function $f_6(x)=3/\{1+\exp(-2x+10)\}, \ x\in (0,10]$.
\end{itemize}

These functions are supposed to be evaluated at data of size $n=100$ with standard deviation $\sigma_{\text{noise}}=1$. The root-mean-square error (RMSE) of the estimates is computed at the one hundred $x$ values taken uniformly (equidistant) in the interval $(0,10]$:
\begin{equation*}
\text{RMSE}=\sqrt{\frac{1}{n}\sum_{i=1}^n\left(f(x_i)-\hat{f}(x_i)\right)^2}, 
\end{equation*}
where $\hat{f}(x)$ is the estimate of $f(x)$ and $x_i$ are the $n$ equally-spaced $x$-values. For the GP approximation developed in this paper, the maximum \textit{a posteriori} estimate (Definition~\ref{Mdef}) is used as an estimate of $f(x)$, where $N$ is fixed to fifty. Let us recall that this estimate depends only on the length hyper-parameter $\theta$. The squared exponential covariance function (Table~\ref{kernels}) is used in the simulation, with $\sigma$ fixed to 1 and $\theta$ estimated using the suited cross-validation method \citep{Maatouk201538,Cousin2016}. Table~\ref{ThetaEst} shows the values of the parameter estimation $\hat{\theta}$.

\begin{table}
\centering
\caption{Length hyper-parameter estimates using a suited cross-validation method.}
\begin{tabular}{*{7}{c}}
  &  Flat  & Step & Linear  & Exponential & Logistic & Sinusoidal    \\ \hline
$\hat{\theta}$     & 100.0 	& 0.8 & 8.6 &	1.0 & 2.0 & 2.5    \\
\end{tabular}
\label{ThetaEst}
\end{table}

In Table~\ref{RMSE}, the RMSE of the estimates is calculated for the finite-dimensional approximation of GPs, and it is compared with results of Gaussian process with and without projection given in \citet{lin2014bayesian} and results of the regression spline method given in \citet{shively2009bayesian}. To ensure stability of results, the simulations have been repeated 5000 times. Table~\ref{RMSE} shows that the finite-dimensional approximation of GPs outperforms regression splines (resp. Gaussian process with and without projection) except in the step and logistic cases (resp. in the step and exponential cases). 

\begin{table}
\centering
\caption{Root-mean-square error ($\times$ 100) for data of size $n=100$. The results are obtained by repeating the simulation 5000 times.}
\begin{tabular}{*{7}{c}}
 &  Flat  & Step & Linear  & Exponential & Logistic & Sinusoidal  \\  \hline  
Gaussian process         & 15.1 	& 27.1 & 16.7 &	19.7 & 25.5 & 21.9    \\
Gaussian process projection        & 11.3 & 25.3 & 16.3 &	19.1 & 22.4  & 21.1     \\
Regression spline         & 9.7  & 28.5 & 24.0 &	21.3   & 19.4  & 22.9 \\
Gaussian process approximation         & 8.2  & 41.1 & 15.8  & 20.8	&  21.0 & 20.6    \\
\end{tabular}
\label{RMSE}
\end{table}

\begin{rem}
Let us recall that the finite-dimensional approximation of GPs developed in the present paper is supposed centered (i.e., mean-zero). To be coherent, the results presented in Table~\ref{RMSE} should be computed when the output values are normalized 
\begin{equation*}
\bar{y}_i=y_i-\bar{y}, \quad i=1,\ldots,n,
\end{equation*}
where $\bar{y}_i$ is the normalized value of the $i$th output observation and $\bar{y}=1/n\sum_{i=1}^ny_i$ is the mean of the output observations. In that case, the finite-dimensional approximation of GPs outperforms regression spline and Gaussian process with and without projection except in the step case. 
\end{rem}

Now, the uncertainty quantification is investigated. The monotone (non-decreasing) function $f(x)=0.32(x+sin(x))$, $x\in (0,10]$ (sinusoidal function) is considered (dashed lines in Fig.~\ref{monot_sinusoidal}). It is evaluated at data of size $n=100$ distributed randomly on $(0,10]$ (grey crosses in Fig.~\ref{monot_sinusoidal}), with standard deviation $\sigma_{\text{noise}}=1$.

\begin{figure}
\centering
\includegraphics[scale=0.35]{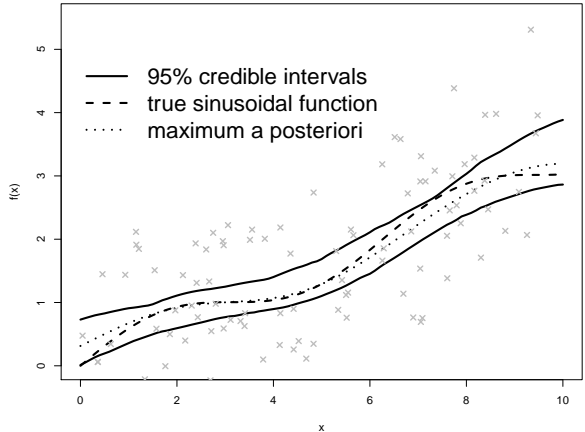}
\caption{The 95\% credible intervals of the Gaussian process approximation together with the sinusoidal function, the observations (grey crosses) and the maximum \textit{a posteriori} estimate.}
\label{monot_sinusoidal}
\end{figure}

In Table~\ref{Converage}, the percentage of the empirical coverage of 95\% pointwise credible intervals of GP approximation is computed by repeating the simulation 1000 times. The coverage for Gaussian process approximation is closer to the nominal 95\% than is that of the Gaussian process at most of input locations chosen by \cite{lin2014bayesian}. Additionally, the finite-dimensional approximation of GPs outperforms Gaussian process with projection at some input locations and slightly bad at the other locations.

\begin{table}
\centering
\caption{Empirical coverage (\%) for 95\% credible intervals at different $x$ values. The simulations are repeated 1000 times.}
\resizebox{0.85\textwidth}{!}{
\begin{minipage}{\textwidth}
\begin{tabular}{*{11}{c}}
 &  0.5  & 1 & 1.5  & 2 & 2.5 & 3 & 3.5 & 4 & 4.5 & 5  \\  \hline  
Gaussian process		& 97.3 & 94.6 & 91.8 & 88.0  & 90.5 & 95.2	 & 96.8 & 91.0 & 86.5 & 86.3 \\
Gaussian process projection        & 94.1 & 95.4 &  92.0 &  89.5 & 93.1 & 94.6 & 96.0 & 90.0 &  89.0 &  86.9  \\
Gaussian process approximation    & 97.0 & 93.0  & 89.6  &  90.1  &  94.1  & 97.1  &  95.5  & 89.5	& 85.4  & 86.7  \\
\end{tabular}
\label{Converage}
\end{minipage}}
\end{table}

To compare the proposed approach with the methodology based on the knowledge of the derivatives of the GP at some input locations, the logistic artificial function $f(x)=2/(1+\exp(-8x+4)), \ x\in [0,1]$ defined in \citet{journals/jmlr/RiihimakiV10} is considered. This function is supposed to be evaluated at data of size $n$ with standard deviation $\sigma_{\text{noise}}=0.5$. The squared exponential covariance function is used. 

\begin{figure}
\centering
\includegraphics[scale=0.35]{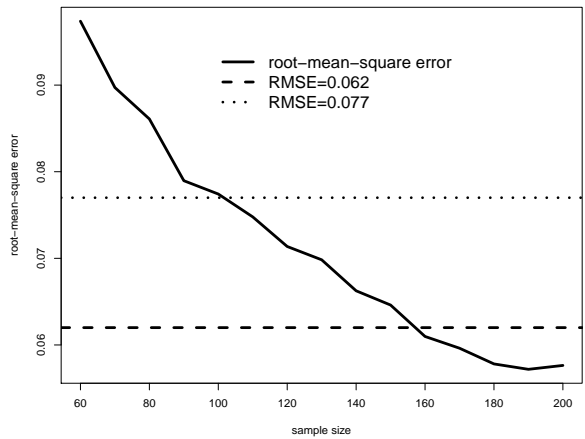}
\caption{The root-mean-square error at different sample sizes together with the optimal values obtained in \citet{journals/jmlr/RiihimakiV10}.}
\label{conv_RMSE_sample_size}
\end{figure}

In \citet{journals/jmlr/RiihimakiV10}, the RMSE is equal to $0.077$ (resp. $0.062$) for $n=100$ (resp. $n=200$). In Fig.~\ref{conv_RMSE_sample_size}, the root-mean-square error using the GP approximation is illustrated at different sample sizes together with the optimal values obtained by \citet{journals/jmlr/RiihimakiV10}. Notice that, we just need data of size $n=160$ to reach the optimal value $0.062$ obtained by \citet{journals/jmlr/RiihimakiV10}. The results are based on 1000 simulation replicates. \\

The isotonicity (non-decreasing) functions with respect to the two inputs used in \citet{lin2014bayesian,saarela2011method} are considered
\begin{eqnarray*}
&&  f_1(x_1,x_2)=\sqrt{x_1}, \ (x_1,x_2)\in [0,1]^2;\\
&& f_2(x_1,x_2)=0\text{.}5x_1+0\text{.}5x_2, \ (x_1,x_2)\in [0,1]^2;\\
&& f_3(x_1,x_2)=\min(x_1,x_2), \ (x_1,x_2)\in [0,1]^2;\\
&& f_4(x_1,x_2)=0\text{.}25x_1+0\text{.}25x_2+0\text{.}5\times\mathbb{1}_{\{x_1+x_2>1\}}, \ (x_1,x_2)\in [0,1]^2;\\
&& f_5(x_1,x_2)=0\text{.}25x_1+0\text{.}25x_2+0\text{.}5\times\mathbb{1}_{\{\min(x_1,x_2)>5\}}, \ (x_1,x_2)\in [0,1]^2;\\
&& f_6(x_1,x_2)=\mathbb{1}_{\{(x_1-1)^2+(x_2-1)^2<1\}}\sqrt{1-(x_1-1)^2-(x_2-1)^2}, \ (x_1,x_2)\in [0,1]^2.
\end{eqnarray*}
The two-dimensional squared exponential covariance function \eqref{SE2D} is used, with $\sigma$ fixed to 1 and $(\theta_1,\theta_2)$ estimated using the suited cross-validation method \citep{Maatouk201538,Cousin2016}. Table~\ref{ThetaEst2D} shows the values of the parameter estimation $(\hat{\theta}_1,\hat{\theta}_2)$.

\begin{table}
\centering
\caption{The summary of length hyper-parameter estimates in two dimensions using cross-validation methods.}
\begin{tabular}{*{7}{c}}
 &  $f_1$  & $f_2$ &  $f_3$  &  $f_4$ &  $f_5$ & $f_6$  \\   \hline 
$(\hat{\theta}_1,\hat{\theta}_2)$  & (0.17,0.38) & (0.46,1.32) & (0.18,0.22) &(0.38,0.01) & (0.08,0.09) & (0.02,0.17)        \\
\end{tabular}
\label{ThetaEst2D}
\end{table}

\begin{table}
\centering
\caption{Mean square error ($\times$ 100) for data of size $n=1024$ with standard deviation $\sigma_{\text{noise}}=0.1$. The results are based on 100 simulation replicates.}
\begin{tabular}{*{7}{c}}
 &  $f_1$  & $f_2$ & $f_3$  & $f_4$ & $f_5$ & $f_6$  \\    \hline
Gaussian process projection        & 0.04 & 0.02 & 0.05 &	0.20 & 0.19  &  0.10  \\
Gaussian process approximation         &  2.86e-3  & 4.40e-4  &  7.09e-3  &  0.55	&  0.34  & 0.04    \\
\end{tabular}
\label{MSE}
\end{table}

In Table~\ref{MSE}, the mean square error (MSE) of the estimates is calculated for the finite-dimensional approximation of GPs, and it is compared with results of Gaussian process projections given in \citet{lin2014bayesian}. Table~\ref{MSE} shows that the finite-dimensional approximation of GPs outperforms Gaussian process projections except in $f_4$ and $f_5$ cases. This is very similar to the one-dimensional case results, because of the similarity of $f_4$ and $f_5$ functions to the step case.

\section{Real application: nuclear safety}
In this section, the performance of the proposed model has been investigated using the real-word data provided by the `Institut of Radioprotection and Nuclear Safety' (IRSN), France. The nuclear reactor of the uranium sphere called `Lady Godiva device' situated at Los Alamos National Laboratory (LANL), New Mexico, U.S. has been studied. The nuclear reactor of the sphere is increasing with respect to the two considered input parameters: its radius (between 0 and 20 cm) and density (between 10 and 20 g/cm$^3$).
\begin{figure}[hptb]
  \centering
  \subfloat[][\label{godivaData}]{\includegraphics[width=.4\textwidth]{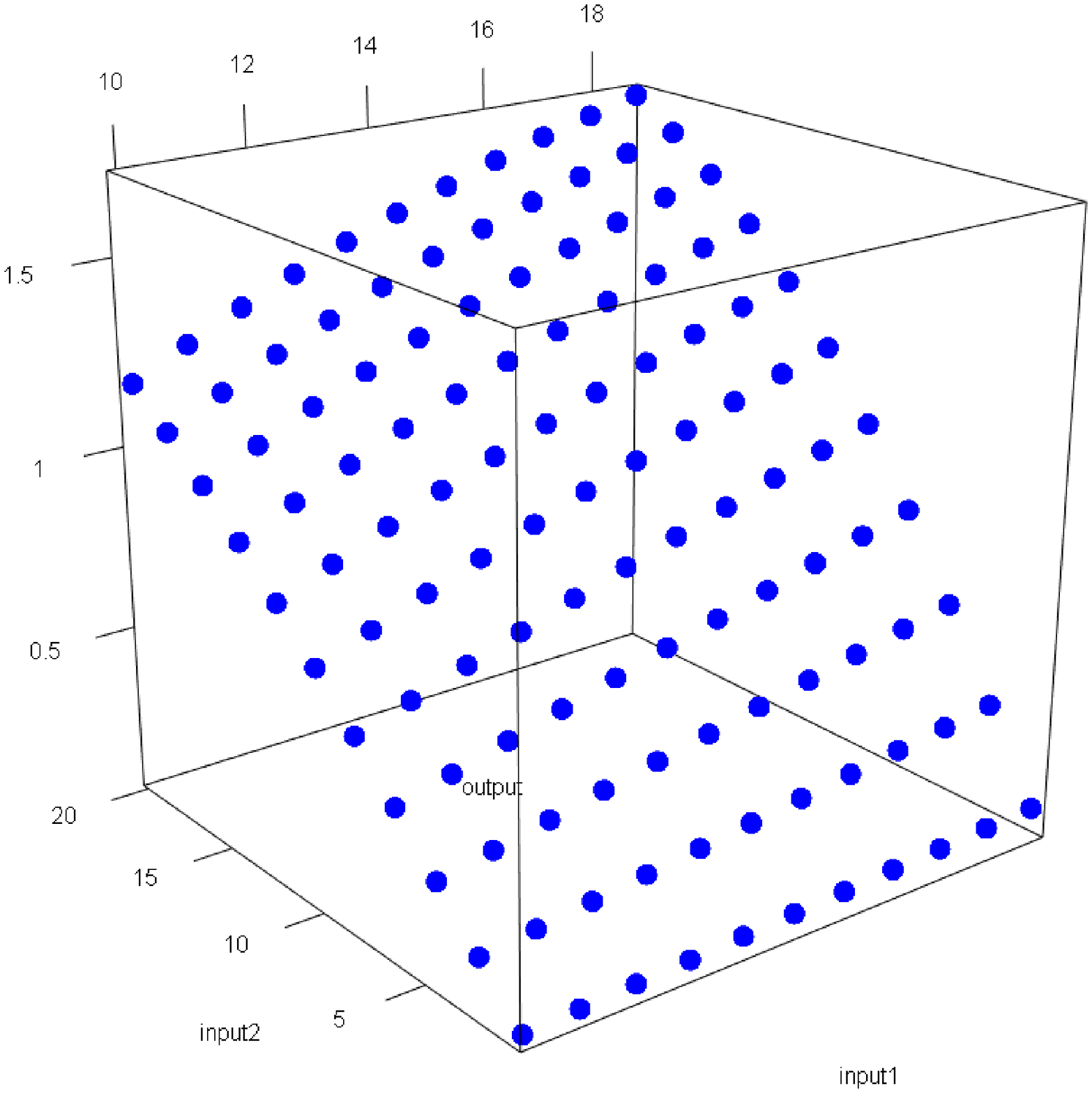}}\quad
  \subfloat[][\label{mode121}]{\includegraphics[width=.4\textwidth]{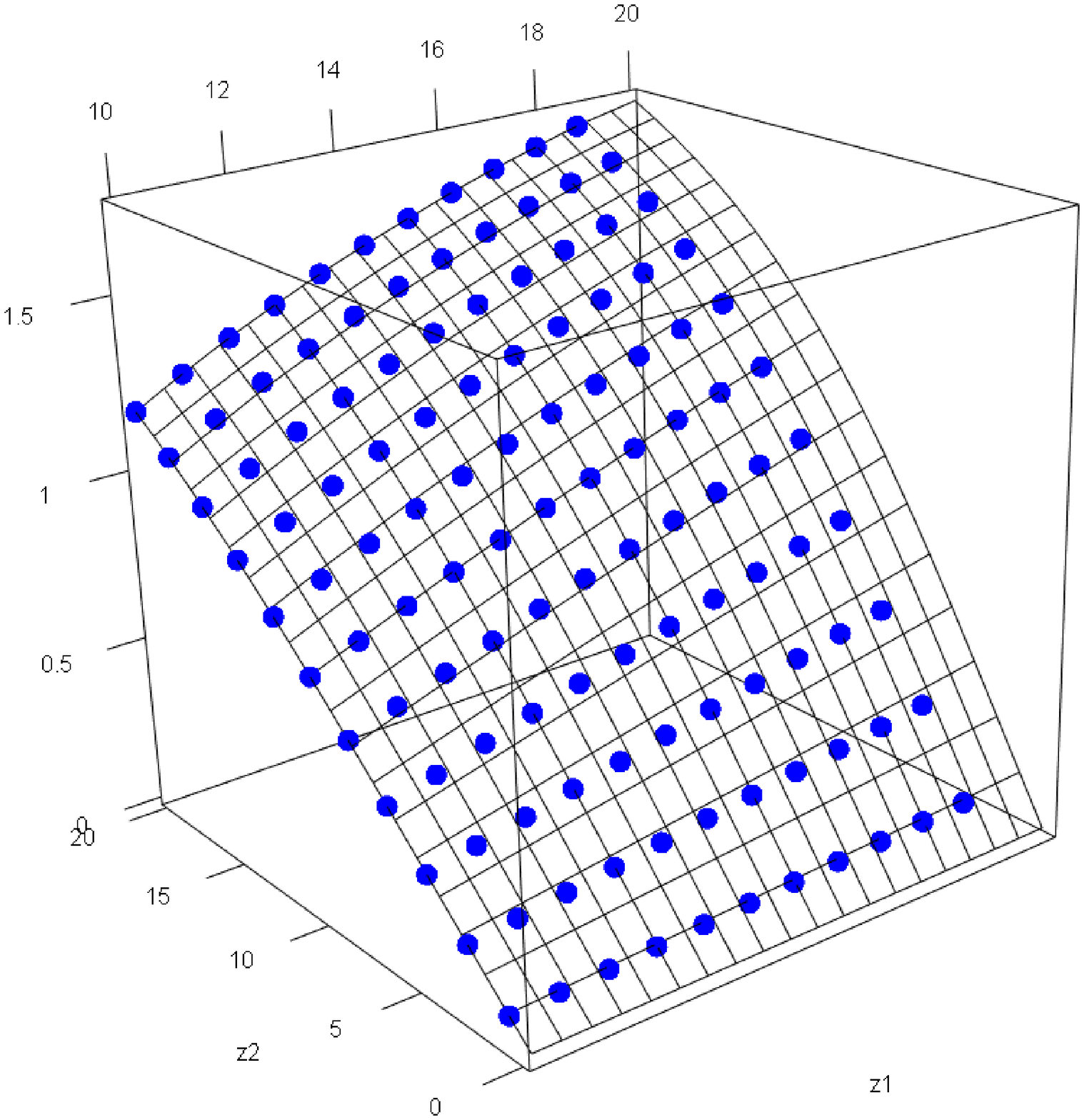}}\\
  \subfloat[][\label{5ptsLOO}]{\includegraphics[width=.4\textwidth]{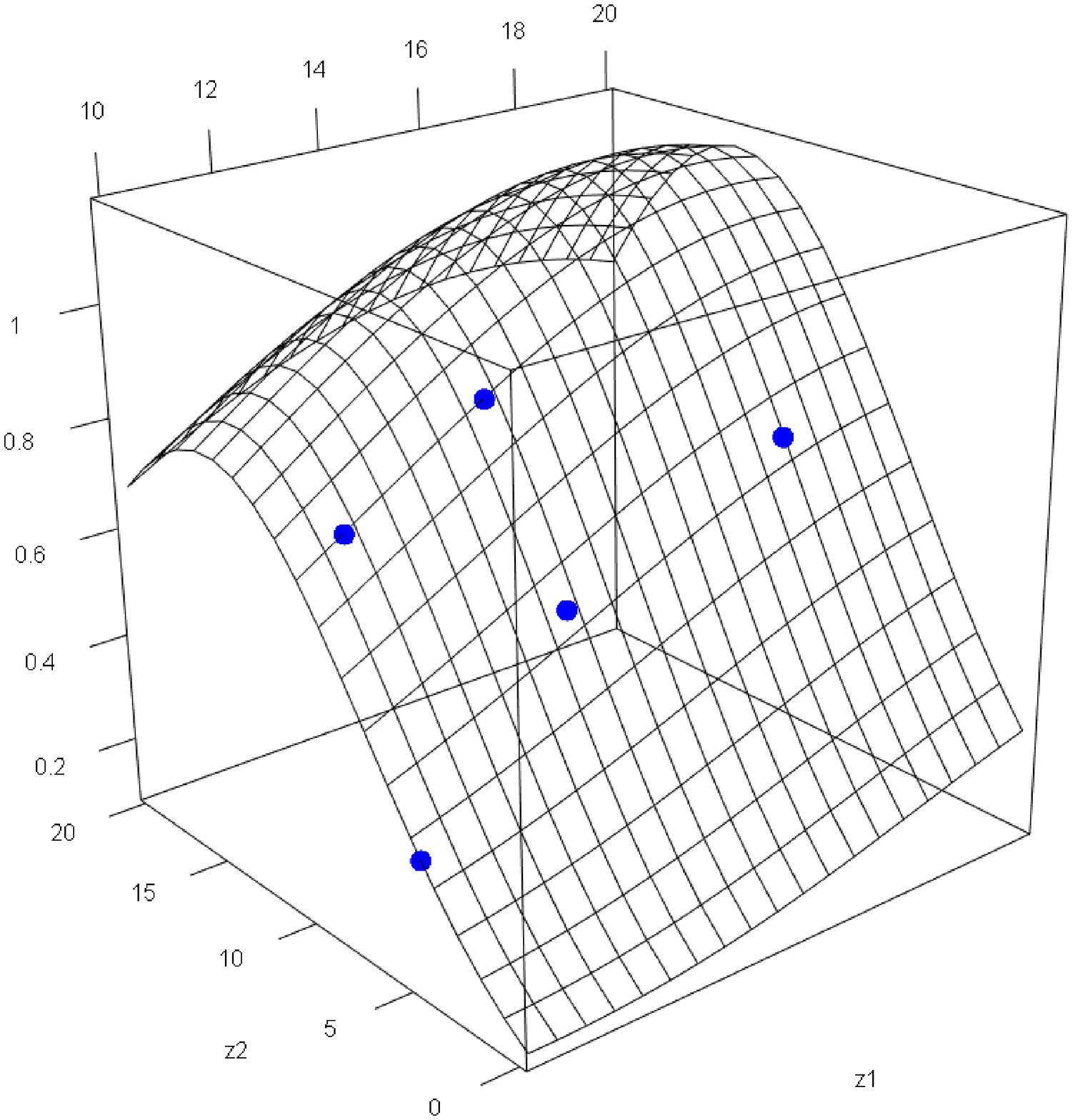}}\quad
  \subfloat[][\label{5ptsMode}]{\includegraphics[width=.4\textwidth]{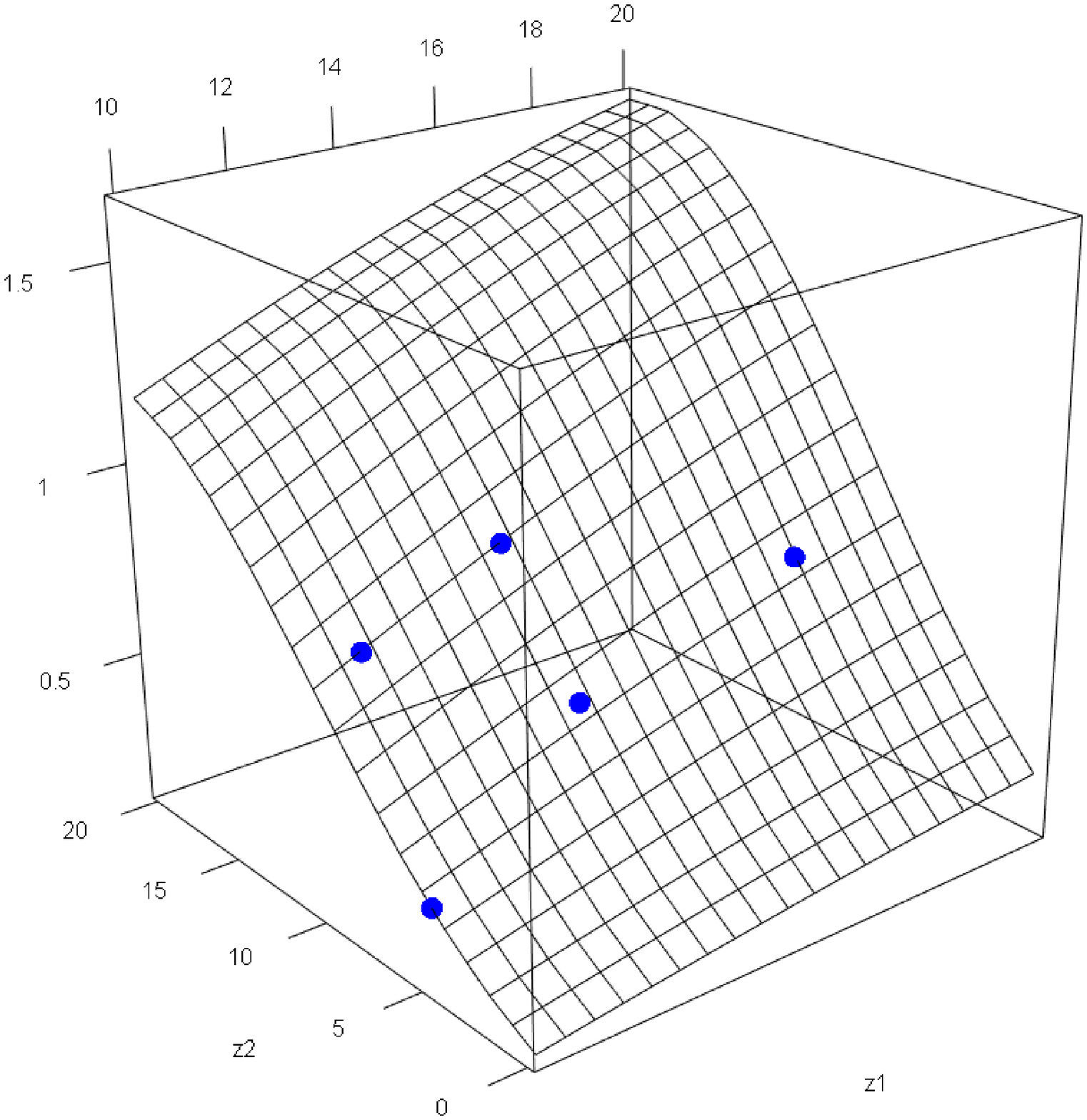}}
  \caption{(a) 3D visualization of  Godiva's data. (b) the maximum of the posterior distribution with 121 observations. (c) the unconstrained GP model with fives observations. (d) the constrained GP model with fives observations.}
  \label{monotonesurfaces}
\end{figure}
The one hundred and tweenty one observations defined on $[0,20]\times [10\times 20]$ (see Figure~\ref{godivaData}) have been used to show the efficiency of the proposed model in term of prediction and to compare it with the unconstrained Gaussian process model. The idea is to fix some of these observations and to test the quality of prediction of the proposed estimator at the other ones. The squared exponential covariance function has been used. In Figure~\ref{5ptsLOO}, we fix five observations using maximin Latin hypercube and we plot the unconstrained mean with the hyper parameters estimated by cross-validation methods. Notice that the unconstrained mean does not respect monotonicity constraints in the entire domain. In Figure~\ref{5ptsMode}, the same observations have been used to plot the maximum of the posterior distribution. The hyper parameters $(\theta_1,\theta_2)$ have been estimated by the suited cross-validation method \cite{Maatouk201538}.
We remark that with few observations, the proposed estimator verifies monotonicity constraints in the entire domain.  
Finally, the $Q^2$ criteria has been used to evaluate the quality of predictions
\begin{equation*}
Q^2=1-\frac{\sum_{i=1}^{n_t}(f(x_i)-\hat{f}(x_i))^2}{\sum_{i=1}^{n_t}(f(x_i)-\overline{y})^2},
\end{equation*}  
where $\hat{f}$ is the proposed estimator, $\overline{y}$ is the mean of the observations and $n_t$ is the number of tested data. The constrained model outperforms the unconstrained one with $Q^2$ equal to 0.98 versus 0.69.

\section{Conclusion}
In this paper, a finite-dimensional approximation of Gaussian processes to incorporate infinite number of inequality constraints (such as boundedness, monotonicity and convexity) and noisy observations is developed. It is based on a linear combination between Gaussian random coefficients and deterministic basis functions. The basis functions are chosen such that the infinite number of inequality constraints on the Gaussian process approximation are equivalent to a finite number of constraints on the coefficients. Consequently, simulate the conditional approximating process is equivalent to simulate a truncated Gaussian vector restricted to convex sets. By this methodology, the mean and the maximum of the posterior distribution are well defined. To show the performance of the proposed model in term of predictive accuracy and uncertainty quantification, a comparison with several recently models dealing with the same constraints is shown.

\section*{Acknowledgements}
Part of this work has been conducted within the frame of the ReDice Consortium, gathering industrial (CEA, EDF, IFPEN, IRSN, Renault) and academic (\'Ecole des Mines de Saint-\'Etienne, INRIA, and the University of Bern) partners around advanced methods for Computer Experiments. The author would like to thank Yann Richet (IRSN, Paris) for providing the nuclear safety data.

\bibliography{biblio}
\bibliographystyle{apalike}

\end{document}